\theoremstyle{definition} 
\theoremstyle{definition} 
\newtheorem {theorem} {Theorem}
\newtheorem {lemma} {Lemma}
\newcommand{\kb}[1]{\ket{#1}\bra{#1}}
\newcommand{\wideq}{\widetilde{q}_A,\widetilde{q}_B}
\newcommand{\al}{\mathcal{A}}
\newcommand{\trd}[1]{\left|\left| #1 \right| \right|}
\newcommand{\st}{\text{ } : \text{ }}
\newcommand{\Hmin}{H_\infty}
\newcommand{\Hextd}{\bar{H}}
\newcommand{\hd}{\Delta_H}
\newcommand{\dc}{\sim_\delta}
\newcommand{\leak}{\texttt{leak}_{\text{EC}}}
\begin{document}
\title{Security of a High Dimensional Two-Way Quantum Key Distribution Protocol}

\date{}
\author[1]{Walter O. Krawec\footnote{Email: \texttt{walter.krawec@uconn.edu}}}
\affil[1]{\small{Department of Computer Science and Engineering}\\\small{University of Connecticut}\\\small{Storrs, CT 06269 USA}}
	
\maketitle
	
\begin{abstract}
Two-way quantum key distribution protocols utilize bi-directional quantum communication to establish a shared secret key.  Due to the increased attack surface, security analyses remain challenging.  Here we investigate a high-dimensional variant of the Ping Pong protocol and perform an information theoretic security analysis in the finite-key setting.  Our methods may be broadly applicable to other QKD protocols, especially those relying on two-way channels.  Along the way, we show some fascinating benefits to high-dimensional quantum states applied to two-way quantum communication.
\end{abstract}

\section{Introduction}
Quantum Key Distribution (QKD) allows two parties to establish a shared secret key, secure against computationally unbounded adversaries.  This is in contrast to classical key distribution where computational assumptions are always required for security.  Typically, QKD protocols utilize a one-way quantum channel, allowing Alice to send quantum resources to Bob.  However, two-way quantum channels, allowing for bi-directional quantum communication between Alice and Bob, are also a possibility and have several exciting benefits, at least in theory, over one-way protocols.  For instance, deterministic key distribution is possible along with secure direct communication \cite{QKD-PP,QKD-SDC,QKD-LM05,PhysRevA.69.052319} and protocols with devices restrictions, such as the so-called ``semi-quantum'' model of cryptography \cite{SQKD-first,SQKD-second,SQKD-survey}.  Experimental implementations of such protocols are also possible \cite{TwoWay-Exp1,TwoWay-Exp2,massa2019experimental,gurevich2013experimental}.  They are also interesting in their own way as they show how alternative methods, such as super dense coding (SDC) \cite{SDC} can be used for QKD (as in the Ping Pong protocol \cite{QKD-PP} or its extension described in \cite{QKD-SDC}).   Certain fiber implementations also may benefit from two-way quantum communication as polarization drift can be compensated for in the return channel \cite{lucamarini2014quantum,two-way-qkd-security}.

Naturally, there are also disadvantages to such two-way systems, in particular when considering photon loss over fiber (as a signal must now travel twice as long) or side-channel attacks against devices.  Furthermore, the two-way channel introduces a second opportunity for Eve to attack each quantum signal making security analyses very difficult in general.  Despite this, the study of two-way QKD protocols is still of importance for a variety of reasons: first, they exhibit, as mentioned, several advantages over one-way protocols and, second, their study may lead to breakthroughs in other fields of quantum cryptography and quantum information science (e.g., in new proof techniques, protocol design methods, or countermeasures to noisy channels).  See \cite{qkd-survey-pirandola,qkd-survey-scarani,amer2021introduction} for general surveys of QKD, both theory and practice.

QKD protocols relying on a two-way channel are arguably not as well understood as their one-way counterparts, especially in the practical finite-key setting.  One of the most powerful methods of proving QKD security is to reduce a protocol to an equivalent entanglement based version and then utilize entropic uncertainty \cite{maassen1988generalized,berta2010uncertainty,survey,survey-2,survey-3,tomamichel2017largely}.  However, these methods typically cannot be applied directly to two-way protocols due to the ability for an adversary to interact twice with a quantum signal (with the second attack perhaps depending on the first attack and the receiving party's actions).  Typical reduction based proofs to entanglement based protocols simply don't work for the most part.

Rather interestingly, in \cite{two-way-qkd-security}, a method for reducing certain types of two-way protocols to equivalent entanglement based versions was shown (thus allowing for the use of entropic uncertainty to derive key-rates and prove security).  However, that result only applied to protocols which held a certain symmetry property which not all do.  In such a protocol, Eve would prepare a tripartite state, sending part to Alice, part to Bob, and keeping part for herself, concluding the quantum communication stage of the protocol.  However, this symmetry property does not always hold, especially for protocols where users are restricted in some way (e.g., they are semi-quantum \cite{SQKD-first,SQKD-second,SQKD-survey} or have other measurement/preparation limitations as with the protocol we consider in this work).  Thus, it is important to develop alternative techniques to handle such scenarios.

In this paper, we analyze a two-way protocol based on the Ping Pong protocol introduced in \cite{QKD-PP}.  At a high level, this protocol involves Alice creating a Bell pair $\ket{\psi} = \frac{1}{\sqrt{2}}(\ket{00} + \ket{11})$ and sending one particle to Bob while keeping the other private.  Bob will then encode his key-bit on the qubit by either performing an identity operator (if his key-bit is zero) or a phase flip $\sigma_Z$ (if his key-bit is one).  On return, Alice will measure in the Bell basis to determine the key-bit.  Obviously if this were the entirety of the protocol it would be completely insecure; thus Alice and Bob must perform suitable checks by measuring in alternative bases to guarantee security.  Interestingly, an extended version of this, using the full super-dense-coding (SDC) protocol was discussed in \cite{QKD-SDC}.

In this work, we consider a high-dimensional variant of the Ping Pong protocol introduced in \cite{QKD-HD-PP}.  However, we consider a restricted version of this protocol.  The protocol we consider in this work is more similar to the original Ping Pong protocol in that it does not fully utilize all possible SDC encodings however, in doing so, it also significantly simplifies the measurement capabilities needed by users (especially as the dimension increases).  This makes it potentially easier to implement in practice, and also introduces some interesting theoretical problems and insights.  For instance, we show later that the noise tolerance of the Ping Pong protocol and the more complicated SDC protocol in \cite{QKD-SDC} are identical in the qubit case.  The noise tolerance of both also match that of BB84 in the qubit case. However as the dimension increases, high-dimensional BB84 (HD-BB84 \cite{bechmann2000quantum}) always outperforms the two-way protocol we analyze here (in stark contrast to the qubit case).

Since the protocol we analyze does not contain the necessary symmetry property for results in \cite{two-way-qkd-security} to be applied, we are also forced to use alternative methods to prove security.  We develop an alternative approach, based on the framework of quantum sampling by Bouman and Fehr \cite{bouman2010sampling} and our recent proof methods for sampling based entropic uncertainty \cite{krawec2019quantum}, to compute the quantum min entropy needed to evaluate finite key-rates.  We also consider the case of imperfect measurement devices and lossy channels.  Our proof of security may be highly beneficial to other QKD protocols.  To our knowledge, we are the first to derive a rigorous finite key analysis for this high-dimensional two-way protocol (the qubit case was analyzed in \cite{shaari2015finite}).  Note that our proof also covers the qubit case, thus giving an alternative proof of security for the original qubit-based Ping Pong protocol.

We make several contributions in this work.  We develop a proof of security in the finite-key setting for general attacks, in both the ideal device scenario and the non-ideal device scenario (including lossy channels and imperfect detectors).  Our proof method may be applicable to other QKD protocols (both one way and two-way) where standard tools cannot be directly applied, thus giving new mathematical tools for other researchers to utilize when proving other protocols secure.  Finally, we perform a rigorous analysis of our resulting key-rates for various dimensions, showing some fascinating properties of the protocol as the dimension of the quantum signal increases.  High-dimensional quantum states have been shown to be very beneficial to various QKD protocols \cite{bechmann2000quantum,chau2005unconditionally,sheridan2010security,sasaki2014practical,chau2015quantum,vlachou2018quantum,cerf2002security,nikolopoulos2005security,iqbal2020high,nikolopoulos2006error,yin2018improved,doda2021quantum} (see \cite{HD-qkd-survey} for a recent survey on high-dimensional quantum communication); we show in this work more evidence that they can also benefit two-way protocols in the finite key setting.

\section{Notation and Definitions}

We begin by introducing some notation and preliminary concepts which we will use throughout this paper.  We use $\al_d$ to denote an alphabet of $d$ characters which has a distinguished $0$ character.  Without loss of generality, we simply assume $\al_d = \{0, 1, \cdots, d-1\}$.  Given a word $q \in \al_d^n$ and a subset $t \subset \{1, \cdots, n\}$, we write $q_t$ to mean the substring of $q$ indexed by $t$, namely: $q_t = q_{t_1}q_{t_2}\cdots q_{t_{|t|}}$.  We write $q_{-t}$ to mean the substring indexed by the complement of $t$.  We use $\mathcal{H}_d$ to denote a $d$-dimensional Hilbert space.

We use $w(q)$ to mean the relative Hamming weight of $q$ defined to be: $w(q) = \frac{|\{i \st q_i \ne 0\}|}{|q|}$.  Given $x,y \in \al_d^n$, we use $\hd(x,y)$ to be the Hamming distance between words $x$ and $y$, namely $\hd(x,y) = \frac{|\{i \st x_i \ne y_i\}|}{n}$.  Finally, let $\delta \ge 0$, then, given two real numbers $x, y$, we write:
\[
x\dc y \iff |x - y| \le \delta.
\]

Given a density operator $\rho_{AB}$ acting on some Hilbert space $\mathcal{H}_A\otimes\mathcal{H}_B$, we write $\rho_B$ to mean the state resulting from tracing out the $A$ system, namely $\rho_{B} = tr_A\rho_{AB}$.  Similarly for the other system.  The conditional quantum min entropy \cite{renner2008security} is defined to be:
\begin{equation}
\Hmin(A|B)_\rho = \sup_{\sigma_B}\max\{\lambda\in\mathbb{R} \st 2^{-\lambda}I_A\otimes \sigma_B - \rho_{AB} \ge 0\},
\end{equation}
where $X \ge 0$ implies $X$ is positive semi-definite.  When the $B$ system is trivial, it is not difficult to see that:
\[
\Hmin(A)_\rho = -\log_2\max\{\lambda \st \lambda \text{ is an eigenvalue of $\rho_A$}\}.
\]
The \emph{smooth conditional min entropy} is defined to be \cite{renner2008security}:
\begin{equation}
\Hmin^\epsilon(A|B)_\rho = \sup_{\sigma_{AB}\in\Gamma_\epsilon(\rho_{AB})}\Hmin(A|B)_\sigma
\end{equation}
where $\Gamma_\epsilon(\rho) = \{\sigma \st \trd{\rho-\sigma} \le \epsilon\}$ and where $\trd{X}$ is the trace distance of operator $X$.

Quantum min entropy is a highly useful quantity to measure in quantum cryptography as it is directly related to the number of uniform independent random bits that may be extracted from a classical-quantum (cq)-state $\rho_{AE}$.  In particular, as shown in \cite{renner2008security}, given cq-state $\rho_{AE}$, following a privacy amplification process, consisting of hashing the classical $A$ register using a two-universal hash function with output size $\ell$-bits resulting in final cq-state $\sigma_{KE}$, it holds that:
\begin{equation}\label{eq:PA}
  \trd{\sigma_{KE} - \frac{I_K}{2^\ell}\otimes\sigma_{E}} \le 2^{\frac{1}{2}(\Hmin(A|E)_\rho - \ell)} + 2\epsilon.
\end{equation}
Thus, to determine the final size of the secret key following the use of any QKD protocol, one requires a bound on the quantum min-entropy of the state $\rho_{AE}$ produced before privacy amplification.  Note that, in a QKD protocol, error correction leaks additional information which must also be taken into account.

An important min entropy expression we will need involves quantum-quantum-classical (qqc) states of the form $\rho_{ABC} = \sum_{c\in\mathcal{C}} p_c \rho_{AB}^{(c)}\otimes\kb{c}$ for some finite set $\mathcal{C}$.  Given such a state, it is straight forward to show from the definition of min entropy that:
\begin{equation}
\Hmin(A|BC)_\rho \ge \min_{c\in\mathcal{C}}\Hmin(A|B)_{\rho^{(c)}}.
\end{equation}
From this, it is also easy to show that, given $\rho_{AB} = \sum_{c\in\mathcal{C}}p_c\rho_{AB}^{(c)}$, it holds that:
\begin{equation}\label{eq:mixed-entropy}
  \Hmin(A|B)_\rho \ge \min_{c\in\mathcal{C}}\Hmin(A|B)_{\rho^{(c)}}.
\end{equation}
The above follows by appending an auxiliary system $C$ that is classical and noting that $\Hmin(A|B) \ge \Hmin(A|BC)$.

The \emph{Shannon entropy} of a random variable $X$ is denoted $H(X)$.  If $X$ is a two outcome random variable with the probability of one outcome being $p$, then we use $H(p)$ to mean the binary Shannon entropy, namely $H(p) = -\log_2p - (1-p)\log_2(1-p)$. The $d$-ary entropy function is denoted $H_d(x)$ and defined to be:
\begin{equation}
H_d(x) = x\log_d(d-1) - x\log_dx - (1-x)\log_d(1-x).
\end{equation}
Observe that $H_2(p) = H(p)$.

Later, we will prove security of the protocol under investigation by considering an ``ideal'' state that is $\epsilon$ close to a real state, on average over a separate, classical, subsystem.  The following lemma will allow us to argue about the entropy in the real state assuming one can bound the entropy of the ideal state under all possible classical outcomes.  This lemma is a generalization of something we proved in \cite{krawec2020new}; here we generalize the result to a broader range of applications:
\begin{lemma}\label{lemma:entropy}
  Let $\rho, \sigma,$ and $\tau$ be three quantum states such that $\frac{1}{2}\trd{\rho-\sigma} \le \epsilon$.  Let $\mathcal{F}$ be a CPTP map such that:
\[
  \mathcal{F}\left(\tau\otimes \rho\right) = \sum_xp_x\kb{x}_X\otimes\rho_{AE}^{(x)}
\]
and
\[
  \mathcal{F}\left(\tau\otimes \sigma\right) = \sum_xq_x\kb{x}_X\otimes\sigma_{AE}^{(x)}
\]
Then, it holds that:
\begin{equation}
  Pr\left(\Hmin^{4\epsilon+2\epsilon^{1/3}}(A|E)_{\rho^{(x)}} \ge \Hmin(A|E)_{\sigma^{(x)}}\right) \ge 1-2\epsilon^{1/3},
\end{equation}
where the probability is over the random variable $X$.  Note that the $\tau$ system may be trivial if not needed; however it may be used to represent, for instance, additional seed randomness needed by $\mathcal{F}$.
\end{lemma}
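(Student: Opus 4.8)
The plan is to transport the closeness hypothesis $\frac12\trd{\rho-\sigma}\le\epsilon$ through $\mathcal{F}$, decompose the resulting trace distance over the classical register $X$, convert the per-branch statement about the sub-normalized operators $p_x\rho^{(x)}$ and $q_x\sigma^{(x)}$ into one about the \emph{normalized} states $\rho^{(x)},\sigma^{(x)}$, and then finish with Markov's inequality and the definition of the smooth min-entropy ball $\Gamma_\epsilon(\cdot)$.

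First I would use that the trace norm is contractive under trace-preserving maps, together with the fact that tensoring by the fixed state $\tau$ leaves it unchanged, i.e. $\trd{\tau\otimes(\rho-\sigma)}=\trd{\tau}\trd{\rho-\sigma}=\trd{\rho-\sigma}$, to get
\[
\trd{\mathcal{F}(\tau\otimes\rho)-\mathcal{F}(\tau\otimes\sigma)}\le\trd{\rho-\sigma}\le 2\epsilon .
\]
Since the states $\kb{x}_X$ are mutually orthogonal, the operator inside is block diagonal in $X$, so its trace norm equals $\sum_x\trd{p_x\rho_{AE}^{(x)}-q_x\sigma_{AE}^{(x)}}$, whence $\sum_x\trd{p_x\rho^{(x)}-q_x\sigma^{(x)}}\le 2\epsilon$. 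Next, for each $x$ I would split $p_x\rho^{(x)}-q_x\sigma^{(x)}=p_x(\rho^{(x)}-\sigma^{(x)})+(p_x-q_x)\sigma^{(x)}$ and combine the triangle inequality with $|p_x-q_x|=|tr(p_x\rho^{(x)}-q_x\sigma^{(x)})|\le\trd{p_x\rho^{(x)}-q_x\sigma^{(x)}}$ to obtain $p_x\trd{\rho^{(x)}-\sigma^{(x)}}\le 2\trd{p_x\rho^{(x)}-q_x\sigma^{(x)}}$. Summing over $x$ gives $\sum_x p_x\trd{\rho^{(x)}-\sigma^{(x)}}\le 4\epsilon$, i.e. $\mathbb{E}_{x\sim p}\trd{\rho^{(x)}-\sigma^{(x)}}\le 4\epsilon$.

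To conclude, by Markov's inequality the event $\trd{\rho^{(x)}-\sigma^{(x)}}\ge 2\epsilon^{2/3}$ has probability at most $4\epsilon/(2\epsilon^{2/3})=2\epsilon^{1/3}$ under $X\sim p$. On the complementary event, $\trd{\rho_{AE}^{(x)}-\sigma_{AE}^{(x)}}\le 2\epsilon^{2/3}\le 4\epsilon+2\epsilon^{1/3}$ (valid for $\epsilon\le 1$, the claim being vacuous otherwise since then $1-2\epsilon^{1/3}<0$), so $\sigma_{AE}^{(x)}\in\Gamma_{4\epsilon+2\epsilon^{1/3}}(\rho_{AE}^{(x)})$, and therefore $\Hmin^{4\epsilon+2\epsilon^{1/3}}(A|E)_{\rho^{(x)}}=\sup_{\omega\in\Gamma_{4\epsilon+2\epsilon^{1/3}}(\rho^{(x)})}\Hmin(A|E)_\omega\ge\Hmin(A|E)_{\sigma^{(x)}}$, as claimed. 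Note the $\tau$ system enters only through $\trd{\tau\otimes(\rho-\sigma)}=\trd{\rho-\sigma}$, so the argument is identical whether or not it is trivial.

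The one point that I expect to need genuine care is the passage from the sub-normalized operators (which is what contractivity controls) to the normalized states appearing in the entropies: the branch weights $p_x$ and $q_x$ need not agree, because $\mathcal{F}$ is applied to the distinct states $\rho$ and $\sigma$, and the degenerate branches where $q_x=0<p_x$ (on which $\sigma^{(x)}$ is not even pinned down by the decomposition) have to be handled separately — they carry total $p$-probability at most $2\epsilon$, so they can be folded into the failure event, or $\sigma^{(x)}$ fixed arbitrarily there since the factor-two inequality above still holds. This bookkeeping, together with the factor-two loss, is exactly what the deliberately generous smoothing parameter $4\epsilon+2\epsilon^{1/3}$ is absorbing; the rest of the argument (trace-norm contractivity, the block-diagonal trace-norm identity, Markov) is routine.
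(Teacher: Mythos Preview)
Your argument is correct and follows essentially the same route as the paper: contractivity of the trace norm under $\mathcal{F}$, the block-diagonal decomposition over $X$, the split $p_x\rho^{(x)}-q_x\sigma^{(x)}=p_x(\rho^{(x)}-\sigma^{(x)})+(p_x-q_x)\sigma^{(x)}$ together with $\sum_x|p_x-q_x|\le 2\epsilon$ to reach $\mathbb{E}_{x\sim p}\bigl[\tfrac12\trd{\rho^{(x)}-\sigma^{(x)}}\bigr]\le 2\epsilon$, and then a concentration step. The only difference is that the paper bounds the variance by the mean (since $\Delta_x\le 1$) and applies Chebyshev with deviation $\epsilon^{1/3}$, obtaining $\Delta_x\le 2\epsilon+\epsilon^{1/3}$ directly, whereas you apply Markov with threshold $2\epsilon^{2/3}$ and then absorb $2\epsilon^{2/3}\le 4\epsilon+2\epsilon^{1/3}$; your version is a touch more elementary, the paper's hits the stated smoothing radius on the nose, and both yield the same failure probability $2\epsilon^{1/3}$.
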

\begin{proof}
Since CPTP maps do not increase trace distance, we have:
\begin{align*}
2\epsilon & \ge \trd{\rho-\sigma} = \trd{\tau\otimes(\rho-\sigma)} \ge \trd{\sum_x p_x\kb{x} \otimes\rho_{AE}^{(x)} - \sum_x q_x\kb{x}\otimes\sigma_{AE}^{(x)}}\\
&\ge \trd{\sum_x p_x\kb{x}\otimes\left(\rho_{AE}^{(x)} - \sigma_{AE}^{(x)}\right) - \sum_x\left(q_x - p_x\right)\kb{x}\otimes\sigma_{AE}^{(x)}}\\
&\ge \trd{\sum_x p_x\kb{x}\otimes\left(\rho_{AE}^{(x)} - \sigma_{AE}^{(x)}\right)} - \sum_x|q_x - p_x|,
\end{align*}
where, for the last inequality, we used the reverse triangle inequality.  Note that $\frac{1}{2}\trd{\rho-\sigma}\le\epsilon$ implies that $\frac{1}{2}\sum_x|q_x-p_x|\le\epsilon$ and so we have:
\begin{equation}
\frac{1}{2}\trd{\sum_x p_x\kb{x}\otimes\left(\rho_{AE}^{(x)} - \sigma_{AE}^{(x)}\right)} \le 2\epsilon.
\end{equation}
Let $\Delta_x = \frac{1}{2}\trd{\rho_{AE}^{(x)} - \sigma_{AE}^{(x)}}$.  We treat this as a random variable over the choice of $x$.  From the above, it is clear that the expected value $\mu = \mathbb{E}(\Delta_x) \le 2\epsilon$.  Furthermore, the variance may also be bounded by $V^2 \le 2\epsilon$ (this is due to the fact that $\Delta_x \le 1$ for all $x$).  Using Chebyshev's inequality it holds that:
\begin{equation}
  Pr\left(|\Delta_x - \mu| \le \epsilon^{1/3}\right) \ge 1 - 2\epsilon^{1/3},
\end{equation}
thus, except with probability $2\epsilon^{1/3}$, it holds that:
\[
\Delta_x \le \epsilon^{1/3} + \mu \le \epsilon^{1/3} + 2\epsilon.
\]
Switching to smooth min entropy completes the proof.
\end{proof}

\subsection{Quantum Sampling}\label{sec:sample}

Our proof technique will make use of a quantum sampling methodology introduced by Bouman and Fehr in \cite{bouman2010sampling}.  This method has been shown to have many applications including the proof of BB84 \cite{bouman2010sampling}; the proof of security of a high-dimensional BB84 protocol \cite{krawec-sample-qkd}; and also the development of novel, so-called \emph{sampling-based entropic uncertainty relations} \cite{krawec2019quantum,krawec2020new}.  We review the terminology and concepts from \cite{bouman2010sampling} in this section, making some generalizations.

Fix an alphabet $\al_d$ and a number $N > 1$.  We define a classical sampling strategy as a triple $(P_T, g, r)$ where $P_T$ is a probability distribution over all subsets of $\{1, \cdots, N\}$ and $g,r:\al_d^*\rightarrow \mathbb{R}$.  The function $g$ is a ``guessing function'' while the function $r$ is a ``target function.''  Then, given some word $q \in \al_d^N$, the strategy consists of the following process: first, sample a subset $t$ according to $P_T$.  Next, given the observation $q_t$ (i.e., given the actual value of $q$ on subset $t$), evaluate $g(q_t)$ to produce a ``guess'' as to the value of $r(q_{-t})$.  The guess (using only the observed portion) should be close to the actual target evaluation (of the \emph{unobserved} portion) with high probability.  In particular, for a good sampling strategy, it should be that, with high probability over the subset choice, one has $g(q_t) \dc r(q_{-t})$.  As an example, one may take $g(x) = r(x) = w(x)$.  In this case, the sampling strategy attempts to guess at the relative Hamming weight in the unobserved portion of some string and the guess is simply the relative Hamming weight of the observed portion.

More formally, fix $\delta > 0$ and some subset $t$.  Then we define the set of ``good'' words in $\al_d^N$ to be:
\[
\mathcal{G}_{t,\delta} = \{q \in \al_d^N \st   r(i_{-t}) \dc g(i_t)\}.
\]
This set consists of all words in the alphabet such that the guess function, given some observed portion of the word and for a given, fixed, subset $t$, produces a $\delta$-close guess at the target function in the unobserved portion.  Namely, this set consists of all words for which the sampling strategy is guaranteed to work assuming the given subset $t$ was actually chosen.  The \emph{error probability} of the given classical sampling strategy, then, is defined to be:
\begin{equation}
\epsilon_\delta^{cl} = \max_{q \in \al_d^N} Pr\left(q \not\in \mathcal{G}_{t,\delta}\right),
\end{equation}
where the probability is over all subsets $t$ chosen according to $P_T$.  The ``$cl$'' superscript is used to indicate that this is the failure probability of a ``classical'' sampling strategy.  It is easy to see that, given a word $q \in \al_d^N$, the probability that the sampling strategy fails to produce a $\delta$-close guess at the target value for this particular word is at most $\epsilon_\delta^{cl}$.

To translate these notions into a \emph{quantum} sampling strategy, fix an orthonormal basis $\{\ket{0},\cdots,\ket{d-1}\}$ of the $d$-dimensional Hilbert space $\mathcal{H}_d$.  From this, consider $\text{span}(\mathcal{G}_{t,\delta})$ defined to be:
\[
\text{span}(\mathcal{G}_{t,\delta}) = \text{span}\left\{\ket{q} \st q \in \mathcal{G}_{t,\delta}\right\},
\]
where $\ket{q} = \ket{q_1}\otimes\cdots\otimes\ket{q_N}$ (again, where each $\ket{q_i}$ is with respect to the given, but arbitrary, orthonormal basis).  Similar to how $\mathcal{G}_{t,\delta}$ represents the set of classical ``good'' words, a state $\ket{\phi}$ living in the above spanning set will be called an \emph{ideal quantum state}.  It is one where sampling in the given basis is guaranteed to produce a collapsed state that acts in a well behaved and understood manner, if using fixed subset $t$.  In particular, given $\ket{\phi}_{AE} \in \text{span}(\mathcal{G}_{t,\delta})\otimes\mathcal{H}_E$ (where the $E$ system may be arbitrarily entangled with the $A$ portion), if the $A$ portion, indexed by fixed subset $t$, is measured in the given basis resulting in outcome $x$, then it is guaranteed that the unmeasured portion of the state collapses to a superposition of the form:
\[
\ket{\phi(x)} = \sum_{i\in J_x}\alpha_i\ket{i}\otimes\ket{E_i},
\]
where $J_x = \{i\in \al_d^{N-|t|} \st  r(i_{-t}) \dc g(x)\}$.

From this, the main result from \cite{bouman2010sampling} can be stated in the following theorem:

\begin{theorem}\label{thm:sample}
  (From results in \cite{bouman2010sampling}): Let $\delta \ge 0$ and $(P_t, g, r)$ be a classical sampling strategy with failure probability $\epsilon_\delta^{cl}$.  Then, for any $\ket{\psi}_{AE} \in \mathcal{H}_A\otimes\mathcal{H}_E$ (typically called the ``real state''), where the $A$ system is an $N$-fold tensor of $\mathcal{H}_d$, there exist a collection of ideal states $\{\ket{\phi^t}_{AE}\}_t$, indexed over all subsets $t$ with $P_T(t) > 0$, such that:
    \begin{enumerate}
      \item Each ideal state lives in the spanning set of good classical words, namely: $\ket{\phi^t}_{AE} \in \text{span}\left(\mathcal{G}_{t,\delta}\right)\otimes\mathcal{H}_E$
      \item On average over all ideal states, the resulting system is $\sqrt{\epsilon^{cl}_\delta}$ close to the given input state, namely:
        \begin{equation}
          \frac{1}{2}\trd{\sum_tP_T(t)\kb{t}\otimes\left(\kb{\psi}_{AE} - \kb{\phi^t}_{AE}\right)} \le \sqrt{\epsilon^{cl}_\delta}.
        \end{equation}
    \end{enumerate}
\end{theorem}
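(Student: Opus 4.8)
The plan is to reconstruct the Bouman--Fehr argument from \cite{bouman2010sampling}, carrying the $E$ register along as a passive spectator. For each subset $t$ with $P_T(t) > 0$, let $\Pi_t = \left(\sum_{q\in\mathcal{G}_{t,\delta}}\kb{q}\right)\otimes I_E$ be the orthogonal projector onto $\text{span}(\mathcal{G}_{t,\delta})\otimes\mathcal{H}_E$, and define the candidate ideal state $\ket{\phi^t}_{AE}$ to be $\Pi_t\ket{\psi}$, renormalized; in the degenerate case $\Pi_t\ket{\psi}=0$ (which forces $\epsilon_\delta^{cl}=1$, making the claimed bound vacuous) just take $\ket{\phi^t}$ to be any unit vector in the span. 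Property~1 then holds by construction, so the entire content of the theorem is the averaged trace-distance bound in property~2.

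First I would peel off the block structure. The operator inside the trace norm in property~2 is block-diagonal with respect to the orthonormal family $\{\ket{t}\}$, so its trace norm equals $\sum_t P_T(t)\,\trd{\kb{\psi}-\kb{\phi^t}}$, and the claim reduces to showing $\sum_t P_T(t)\cdot\frac12\trd{\kb{\psi}-\kb{\phi^t}}\le\sqrt{\epsilon_\delta^{cl}}$. Since $\ket{\psi}$ and $\ket{\phi^t}$ are pure, the standard fidelity/trace-distance identity gives $\frac12\trd{\kb{\psi}-\kb{\phi^t}}=\sqrt{1-|\braket{\psi|\phi^t}|^2}$; and because $\ket{\phi^t}$ is the normalized image of $\ket{\psi}$ under the projector $\Pi_t$ one has $|\braket{\psi|\phi^t}|^2=\bra{\psi}\Pi_t\ket{\psi}$, hence $\frac12\trd{\kb{\psi}-\kb{\phi^t}}=\sqrt{\bra{\psi}(I-\Pi_t)\ket{\psi}}$. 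Applying concavity of the square root (Jensen's inequality) over the distribution $P_T$ then bounds the sum by $\sqrt{\bra{\psi}\,M\,\ket{\psi}}$, where $M=\sum_t P_T(t)(I-\Pi_t)$.

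The remaining, and only non-mechanical, step is to bound $\bra{\psi}M\ket{\psi}$ by $\epsilon_\delta^{cl}$. The key observation is that $M$ acts as the identity on $\mathcal{H}_E$ and is \emph{diagonal} in the fixed sampling basis on the $A$ register: the eigenvalue attached to $\ket{q}$ is $\sum_{t\,:\,q\notin\mathcal{G}_{t,\delta}}P_T(t) = Pr\left(q\notin\mathcal{G}_{t,\delta}\right)$, the probability over the random subset $t\sim P_T$. Hence the largest eigenvalue of $M$ is $\max_{q\in\al_d^N}Pr\left(q\notin\mathcal{G}_{t,\delta}\right)$, which is exactly $\epsilon_\delta^{cl}$ by the very definition of the classical failure probability, so $\bra{\psi}M\ket{\psi}\le\epsilon_\delta^{cl}\braket{\psi|\psi}=\epsilon_\delta^{cl}$. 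Chaining the inequalities gives $\frac12\trd{\sum_t P_T(t)\kb{t}\otimes(\kb{\psi}-\kb{\phi^t})}\le\sqrt{\epsilon_\delta^{cl}}$, as required.

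I expect this last step --- recognizing that the averaged ``bad-subspace'' operator $M$ is diagonal in the sampling basis and that its top eigenvalue coincides with the classical error probability --- to be the conceptual heart of the argument; it is precisely the mechanism by which the quantum sampling bound inherits the classical one, the only loss being the square root introduced by Jensen. The block-diagonal trace-norm decomposition, the pure-state fidelity identity, and the bookkeeping for the degenerate projection case are all routine.
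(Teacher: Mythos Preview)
The paper does not actually prove this theorem; immediately after the statement it remarks that the result is a rewording of the main theorem of \cite{bouman2010sampling} and refers the reader to \cite{krawec-sample-qkd} for a verification that this particular formulation follows. There is therefore no in-paper argument to compare against.

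Your reconstruction is the standard Bouman--Fehr argument and is correct in all essential respects: define $\ket{\phi^t}$ as the (normalized) projection of $\ket{\psi}$ onto the good subspace, reduce the averaged trace distance to $\sum_t P_T(t)\sqrt{\bra{\psi}(I-\Pi_t)\ket{\psi}}$ via block-diagonality and the pure-state fidelity identity, apply Jensen, and then observe that $M=\sum_t P_T(t)(I-\Pi_t)$ is diagonal in the sampling basis with top eigenvalue exactly $\epsilon_\delta^{cl}$. This is precisely the mechanism in \cite{bouman2010sampling}.

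One small correction to your handling of the degenerate case: $\Pi_t\ket{\psi}=0$ for some particular $t$ does \emph{not} force $\epsilon_\delta^{cl}=1$; the state $\ket{\psi}$ could be supported on words that are bad for this $t$ but good for most other subsets. Nonetheless your conclusion survives for a different reason: if $\Pi_t\ket{\psi}=0$ then $\bra{\psi}(I-\Pi_t)\ket{\psi}=1$, and for \emph{any} unit vector $\ket{\phi^t}$ in the span one has $\tfrac12\trd{\kb{\psi}-\kb{\phi^t}}\le 1=\sqrt{\bra{\psi}(I-\Pi_t)\ket{\psi}}$, so the per-$t$ inequality you feed into Jensen still holds. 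With that adjustment the argument is complete.
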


Note that the above theorem is a slight rewording of the result in \cite{bouman2010sampling}.  For a proof that this version follows from results in \cite{bouman2010sampling}, see our work in \cite{krawec-sample-qkd}.

To use Theorem \ref{thm:sample}, one must first define a sampling strategy which, thus, defines a set $\mathcal{G}_{t,\delta}$.  From this, if one knows the error probability of the classical sampling strategy, Theorem \ref{thm:sample}  may be used to ``promote'' the strategy to a quantum one acting on superposition states.

In our protocol analysis, we will require the following natural two-party sampling strategy.  Given a word $(x,y) \in \al_d^N\times \al_d^N$, with $N = n+m$, a subset $t\subset \{1, \cdots, N\}$ of size $m$ is chosen uniformly at random.  We are then given observation $(x_t, y_t)$.  The target function is the Hamming distance of $x_{-t}$ and $y_{-t}$, namely: $r(x_{-t}, y_{-t}) = \hd(x_{-t}, y_{-t})$. The guess function is the Hamming distance of the observed strings, namely: $g(x_{t}, y_{t}) = \hd(x_t, y_t)$.  In particular, this strategy observes the $t$ portion of both $x$ and $y$ and guesses that the Hamming distance in the unobserved portion is $\delta$-close to the Hamming distance of the observed strings.  Note that the set of good words for this strategy, therefore, is:
\begin{equation}\label{eq:good-words}
  \mathcal{G}_{t,\delta} = \left\{(x,y) \in \al_d^N\times\al_d^N \st \hd(x_t,y_t) \dc \hd(x_{-t},y_{-t})\right\}.
\end{equation}

The following Lemma was proven in \cite{krawec-sample-qkd}:
\begin{lemma}\label{lemma:sample}
  (From \cite{krawec-sample-qkd}): Given the above sampling strategy with $m \le n$ and $\delta > 0$, then:
  \begin{equation}
    \epsilon_{\delta}^{cl} \le 2\exp\left(\frac{-\delta^2 m(n+m)}{m+n+2}\right).
  \end{equation}
\end{lemma}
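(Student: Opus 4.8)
The plan is to reduce this two-party Hamming-distance strategy to the familiar one-party relative-Hamming-weight strategy of Bouman and Fehr, and then control the latter with a tail bound for sampling without replacement. Given a pair $(x,y)\in\al_d^N\times\al_d^N$, define the binary indicator word $b\in\al_2^N$ by $b_i = 1$ if $x_i\ne y_i$ and $b_i=0$ otherwise. Then for every subset $t$ we have $\hd(x_t,y_t)=w(b_t)$ and $\hd(x_{-t},y_{-t})=w(b_{-t})$, so the set of good words in \eqref{eq:good-words} is exactly the preimage, under the map $(x,y)\mapsto b$, of the good-word set for the strategy with guess and target both equal to relative Hamming weight. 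Since the right-hand side below does not depend on which pair realizes a given $b$, it follows that $\epsilon_\delta^{cl}\le\max_{b\in\al_2^N}\Pr_t\!\left(w(b_t)\not\dc w(b_{-t})\right)$, where $t$ is a uniformly random subset of size $m$.

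Next I would fix $b$ of (absolute) Hamming weight $k$ and set $Z=\sum_{i\in t}b_i$, which is hypergeometric: sampling $m$ of the $N=n+m$ coordinates without replacement from a population containing $k$ ones. A short computation gives $w(b_t)-w(b_{-t}) = Z/m-(k-Z)/n = \tfrac{N}{n}\bigl(Z/m-k/N\bigr)$, and since $\mathbb{E}[Z/m]=k/N$, the event $w(b_t)\not\dc w(b_{-t})$ is precisely the event that the sample mean $Z/m$ deviates from the population mean $k/N$ by more than $\delta n/N$. Applying Serfling's inequality for sampling without replacement (in its two-sided form, via a union bound over the two one-sided estimates),
\[
\Pr\!\left(\left|\frac{Z}{m}-\frac{k}{N}\right|\ge\frac{\delta n}{N}\right)\le 2\exp\!\left(\frac{-2m(\delta n/N)^2}{1-(m-1)/N}\right)=2\exp\!\left(\frac{-2\delta^2 m n^2}{N(n+1)}\right),
\]
a bound independent of $k$, hence also a bound on $\epsilon_\delta^{cl}$.

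The remaining step is purely algebraic: using the hypothesis $m\le n$ one checks that $2n^2(N+2)\ge N^2(n+1)$ (equivalently $n^3+3n^2-nm^2-2nm-m^2\ge 0$, clear since $nm^2\le n^3$, $2nm\le 2n^2$, $m^2\le n^2$), equivalently $\tfrac{2n^2}{N(n+1)}\ge\tfrac{N}{N+2}$, so the exponent above is at most $-\delta^2 mN/(N+2) = -\delta^2 m(n+m)/(m+n+2)$, giving the claim. I expect the only real subtlety to be the choice of concentration inequality and the bookkeeping of its constant: a plain Hoeffding bound (treating the draws as if with replacement) yields only $2\exp(-2\delta^2 m n^2/N^2)$, which is too weak near $m\approx n$, so the finite-population correction in Serfling's inequality is genuinely needed, and it is exactly that correction that makes the final elementary inequality — which is tight at $m=n$ — go through.
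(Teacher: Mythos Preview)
The paper does not give its own proof of this lemma; it simply quotes the bound from \cite{krawec-sample-qkd}. Your argument is correct and complete: the reduction to the one-party Hamming-weight strategy via the indicator word $b$, the identification of $Z$ as hypergeometric with the rewriting $w(b_t)-w(b_{-t})=\tfrac{N}{n}(Z/m-k/N)$, the application of Serfling's inequality with finite-population factor $1-(m-1)/N=(n+1)/N$, and the final algebraic comparison $2n^2(N+2)\ge N^2(n+1)$ under $m\le n$ all check out (and the comparison is indeed tight at $m=n$, as you note). This is presumably the same route taken in the cited reference, since Serfling's bound is the standard tool for sampling without replacement and the exponent in the lemma's statement matches your computation exactly after the simplification you carry out; your remark that a plain Hoeffding bound without the finite-population correction would not suffice near $m\approx n$ is also correct.
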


\section{Protocol}

We consider a variant of the high-dimensional version of the Ping Pong protocol \cite{QKD-HD-PP} which, as discussed in the introduction, utilizes a two-way quantum channel.  We assume the dimension of a single signal state is $d \ge 2$ (when $d=2$ we recover the original Ping Pong protocol).  A single round of the protocol involves Alice preparing an entangled state of the form
\[
\ket{\psi_0} = \frac{1}{\sqrt{d}}\sum_{a=0}^{d-1}\ket{a,a}_{AT}.
\]
Note this is a $d^2$ dimensional state.  She will keep the $A$ register private while sending the $d$-dimensional $T$ register to Bob over the first quantum channel (here, we use ``$T$'' to mean the ``Transit'' register).  Bob, on receipt of the signal, will choose randomly whether this round will be a \emph{Test Round} or a \emph{Key Round}.  If the former, he measures his signal in the computational $Z = \{\ket{0}, \cdots, \ket{d-1}\}$ basis and sends his measurement result to Alice (over the public authenticated channel).  In this case, Alice knows Bob chose this to be a test round, so she, too, will measure her signal in the $Z$ basis and compare with Bob's outcome which should be correlated.  Otherwise, if this is a key round, Bob will choose a random key digit for this round $k \in \al_d$ and encode his choice onto his received signal using the unitary operator:
\begin{equation}
U_k = \sum_{a=0}^{d-1}\exp(2\pi i k a /d)\kb{a}.
\end{equation}
He then returns the signal to Alice.  Note that, the main operational difference between our protocol and that of \cite{QKD-HD-PP} is that Bob has fewer encoding options - this decreases the overall efficiency of the protocol compared to the original, but also decreases user's complexity (interestingly, if Bob were to have more encoding options, he would also have to do more test measurements, besides $Z$, to ensure the fidelity of the state sent from Alice).  Thus, though we limit the transmission rate below what is theoretically possible by creating this $d^2$ state, we also only require Bob to measure in the $Z$ basis as opposed to a number of bases scaling with the dimension of the system as required in the original protocol.  Depending on the encoding system used (e.g., time bin encoding \cite{timebin1} which is a useful encoding method for high-dimensional quantum cryptography \cite{boaron2018simple,islam2017provably,brougham2013security}), this can greatly simplify Bob's device requirements.  It also simplifies Alice's device as she must only measure in two bases.

On receipt of the signal from Bob, Alice will measure both her original $A$ register and the returned $T$ register using POVM $\Lambda = \{\Lambda_0, \cdots, \Lambda_{d-1}\}$ where:
\begin{equation}
\Lambda_k = \sum_a P \left( \sum_j \exp(2\pi i k j / d)\ket{j, j+a}\right),
\end{equation}
where $P(z) = zz^*$.  Note that, in the event of no noise, if Bob chooses a particular value $k$ to encode, it should be that POVM element $\Lambda_k$ will be observed with unit probability.  The protocol then repeats $N$ times, for $N$ sufficiently large.  Each key-round contributes an additional $\log_2 d$ bits to the users' raw keys.  At the conclusion of these $N$ rounds, users will run a standard error correction and privacy amplification protocol to determine their final secret key of size $\ell$ bits.

To analyze the security of this protocol in the finite key setting, we will actually consider a slightly modified version where we allow Eve to prepare the initial signal state.  In particular, Eve will prepare a quantum state $\ket{\psi}_{ATE}$ where the $A$ and $T$ registers each live in a Hilbert space of dimension $d^{N}$, where $d$ is the dimension of a single round's signal state and $N$ is the total number of rounds used in the protocol (a parameter that may be optimized by the users Alice and Bob).  Note that we make no assumptions on this state other than the dimension of the $A$ and $T$ portions.  The $A$ register is given to Alice and the $T$ register is given to Bob.

Alice and Bob next choose a random subset $t$ of size $m \le N/2$ where $m$ is another parameter that users may optimize over (these will be the test rounds).  This process may be done by having Bob choose the subset and sending it to Alice over the authenticated channel; or if $m$ is small enough, it may be produced using a small pre-shared secret key.  We assume the former option but our proof works in the second scenario simply by subtracting $\log_2{N\choose m}$ bits from our final derived secret key $\ell$.  On this test subset, Alice and Bob measure their signals in the $Z$ basis recording the result as $q_A$ and $q_B$.  Both parties send their measurement outcomes to each other over the authenticated channel allowing them to compute $\hd(q_A, q_B)$.  Note that, if Eve were honest and prepared the state $\ket{\psi_0}_{AT}^{\otimes N}\otimes \ket{0}_E$, it should be that $\hd(q_A,q_B) = 0$.  Any $\hd(q_A,q_B) > 0$ will be considered noise and will be factored into our key-rate equation. If the noise is too high, parties will abort.

\begin{figure}
    \centering
    \includegraphics[width=.75\textwidth]{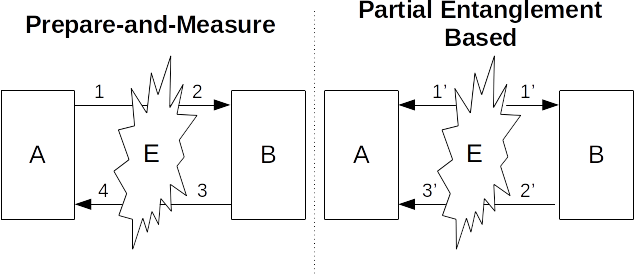}
    \caption{Left: High-level diagram of the two-way QKD protocol.  (1) Alice prepares some quantum system and sends part to Bob; (2) After Eve attacks the signal, Bob receives the quantum state; (3) Bob performs some testing on the state and encodes his raw key on the remaining portion; (4) Finally, Eve attacks a second time and returns the signal to Alice (who attempts to guess the key that Bob encoded while also performing some tests).  Right: Diagram of the partial entanglement based version that we prove secure. (1') Eve prepares some initial state and sends part to Alice, part to Bob and keeps part to herself; (2') Alice and Bob do some tests on their systems and then Bob encodes his key onto the remaining part, sending it back to Alice; (3') Eve attacks a second time and returns a signal to Alice.  Clearly security of the partial entanglement based version (Right) will imply security of the actual protocol under consideration (Left).  Note that this is not a full entanglement based protocol as Bob must return a quantum signal to Alice and Eve is allowed to attack a second time.}
    \label{fig:diagram}
\end{figure}

After this test, Bob will choose a random key $K = k_1\cdots k_n \in \al_d^n$, where $n = N-m$.  He then applies the unitary operator $\mathcal{U}_K = U_{k_1}\otimes\cdots\otimes U_{k_n}$ to the remaining, unmeasured, portion of his signal received from Eve.  After this, he sends his quantum state to Alice who will subsequently measure each state using POVM $\Lambda$ producing her raw-key $\widetilde{K}$ which, in the noiseless case, should match exactly Bob's key encoding of $K$.  Parties then run error correction and privacy amplification as before.  It is not difficult to see that security of this protocol, where Eve prepares the initial state, will imply security of the desired protocol where Alice prepares the state.  Note that this is not a full entanglement based protocol as it still involves two-way quantum communication and Eve still has two opportunities to attack.  Nonetheless, we will show a method to prove general security in this scenario.  See Figure \ref{fig:diagram} for a diagram depicting the two scenarios.

\section{Security Analysis}\label{sec:security1}

We now analyze the security of the partial entanglement based protocol in the finite key setting.  Note that the sampling used in the protocol is exactly the strategy discussed in Section \ref{sec:sample} and analyzed in Lemma \ref{lemma:sample}.  Using this, we have $\mathcal{G}_{t,\delta}$ as defined in Equation \ref{eq:good-words}.

Let $\epsilon > 0$ be given (it will, as we later show, directly relate to the security of the secret key and failure probability of the protocol).  Also, let $m$ and $n$ be fixed, with $N = n+m$.  Here, $N$ will be the number of signals sent while $m$ is the subset size chosen for testing.  Then, we set:
\begin{equation}\label{eq:delta}
  \delta = \sqrt{\frac{(m+n+2)\ln(2/\epsilon^2)}{m(m+n)}}.
\end{equation}
Note that, from Lemma \ref{lemma:sample}, this implies $\sqrt{\epsilon_\delta^{cl}} = \epsilon$.

Let $\ket{\psi}_{ATE}$ be the initial state Eve prepares with the $A$ and $T$ registers each consisting of $N$ copies of a $d$-dimensional Hilbert space.  From Theorem \ref{thm:sample}, using the sampling strategy discussed in Section \ref{sec:sample}, we know that there exist ideal states $\{\ket{\phi^t}\}_t$ such that $\ket{\phi^t} \in \text{span}(\mathcal{G}_{t,\delta})\otimes\mathcal{H}_E$ and, furthermore:
\[
\trd{\frac{1}{T}\sum_t\kb{t}\otimes\kb{\psi} - \frac{1}{T}\sum_t\kb{t}\otimes\kb{\phi^t}} \le \sqrt{\epsilon_\delta^{cl}} = \epsilon.
\]
Above, we are defining the quantum ideal set $\text{span}(\mathcal{G}_{t,\delta})$ with respect to the computational basis $\{\ket{0}, \cdots, \ket{d-1}\}$.  We will first analyze the ideal state $\sigma = \frac{1}{T}\sum_t\kb{t}\otimes\kb{\phi^t}$ and later promote this analysis to the real state $\kb{\psi}$ actually produced by the adversary.  

Consider $\sigma$; choosing a subset for sampling is equivalent to measuring the first register causing the state to collapse to a particular ideal state $\kb{\phi^t}$.  At this point, Alice and Bob measure their systems, indexed by $t$ in the $Z$ basis resulting in outcome $q_A$ and $q_B$ respectively (these are $m$-character strings in some $d$-letter alphabet).  Since $\ket{\phi^t} \in \text{span}(\mathcal{G}_{t,\delta})\otimes\mathcal{H}_E$, it is clear that, conditioning on measurement outcome $q_A, q_B$, the state collapses to one of the form:
\begin{equation}\label{eq:ideal-collapse}
  \ket{\phi^t(q_A,q_B)} = \ket{\mu} = \sum_{a,b\in J(q_A,q_B)}\widetilde{\alpha}_{a,b}\ket{a,b}_{AB}\otimes\ket{\widetilde{E}_{a,b}}
\end{equation}
where
\[
J(q_A,q_B) = \{(a,b)\in\al_d^n\times\al_d^n \st \hd(a,b) \dc \hd(q_A,q_B)\}.
\]
We may rewrite the above in the equivalent form:
\begin{equation}
  \ket{\mu} = \sum_{a\in\al_d^n}\alpha_a\ket{a} \otimes\sum_{b\in \mathcal{J}(q_A,q_B : a)} \beta_{b|a}\ket{b}\ket{E_{a,b}},
\end{equation}
with:
\begin{equation}
  \mathcal{J}(q_A,q_B:a) = \{b \in \al_d^n \st \hd(a,b) \dc \hd(q_A,q_B)\}.
\end{equation}
Note that some of the $\alpha$'s and $\beta$'s appearing above may be zero.

From this post measurement state, Bob will encode his key choice.  He chooses a random key $K = k_1\cdots k_n \in \mathcal{A}_d^n$ and applies $\mathcal{U}_K$ to his portion of the quantum state $\ket{\mu}$.  Let $\ket{\mu^K} = I_A\otimes\mathcal{U}_K\otimes I_E\ket{\mu}$.  Then it is not difficult to see that:
\begin{equation}
  \ket{\mu^K} = \sum_{a\in\al_d^n}\alpha_a\ket{a} \otimes\sum_{b\in \mathcal{J}(q_A,q_B : a)} \beta_{b|a}\exp(2\pi i (b\cdot K)/d)\ket{b}\ket{E_{a,b}},
\end{equation}
where $b\cdot K = b_1k_1 +\cdots + b_nk_n$.  This entire process of $B$ choosing $K$ and encoding his choice onto $\ket{\mu}$ may be described by the density operator:
\[
\sum_{K \in \al_d^n}\frac{1}{d^n}\kb{K}\otimes\kb{\mu^K}_{ATE}.
\]
At this point, Bob sends the $T$ register to Alice, keeping, of course, the $K$ register private.  Eve intercepts this return signal and is allowed to probe all $n$ of the returning qubits simultaneously with an attack which also may depend on her initial ancilla (see Figure \ref{fig:diagram}).  After this probe, Eve must forward an $n$-qubit register to Alice for her to complete the protocol.  Eve's goal is to gain information on the $K$ register held privately by Bob.  In particular, we need a bound on the min entropy $\Hmin(K|E')$ of the quantum state following this probe and the forwarding of the $n$ qubits to Alice.  However, we will actually compute a bound on $\Hmin(K|TE)$ before this probe and before Eve forwards an $n$-qubit register to Alice; from the data processing inequality, it is clear that this will serve as a lower bound for our desired $\Hmin(K|E')$, namely $\Hmin(K|E') \ge \Hmin(K|TE)$.

Consider $\sigma'$ as defined above.  Tracing out the $A$ register yields:
\[
\sigma'_{KTE} = \sum_{a\in\al_d^n}|\alpha_a|^2\underbrace{\sum_{K}\frac{1}{d^n}\kb{K}\otimes P\left(\sum_{b\in \mathcal{J}(q_A,q_B:a)}\beta_{b|a}\exp(2\pi i (b\cdot K)/d)\ket{b}\ket{E_{a,b}}\right)}_{\sigma_{KTE}^{(a)}}.
\]
From Equation \ref{eq:mixed-entropy}, we have $\Hmin(K|TE)_{\sigma'} \ge \min_a\Hmin(K|TE)_{\sigma^{(a)}}$.  We claim that for any $a$, it holds that:
\begin{equation}\label{eq:entropy-ideal-state}
\Hmin(K|TE)_{\sigma^{(a)}} \ge n\left(\log_2d - \frac{H_d(\hd(q_A,q_B) + \delta)}{\log_d 2}\right)
\end{equation}

To give some intuition behind this claim, imagine the ideal case where there is absolutely no noise (thus $b = a$ always) and no finite sampling imprecision (i.e., $\delta = 0$).  In this case $|\mathcal{J}(q_A,q_B:a)| = 1$ which implies that $\sigma^a_{KTE}$ is of the form:
\begin{align*}
  \sigma^a_{KTE} &= \sum_K\frac{1}{d^n}\kb{K}\otimes\left|\exp(2\pi i (a\cdot K)/d)\right|^2\kb{a}\otimes\kb{E_{a,a}}\\
  &=\sum_K\frac{1}{d^n}\kb{K}\otimes\kb{a}\otimes\kb{E_{a,a,}}.
\end{align*}
From this, it is obvious that $\Hmin(K|TE) = n\log_2 d$ since the $TE$ system is independent of the $K$ register.  Of course, this ideal case is impossible to achieve (even without noise, it would still hold that $\delta > 0$ in a finite key setting).  However, if the noise is ``small enough'' then the set $\mathcal{J}(q_A,q_B:a)$ is also ``small'' and, as we now show, the entropy cannot decrease by too much.

To prove our claim, we will use a proof technique from \cite{renner2008security} used to bound the min entropy of a superposition state.  While the proof of our claim is nearly identical to that used in \cite{renner2008security}, the statement is different and does not immediately follow from that prior work and, so, is worth stating and proving here.

From $\sigma^{(a)}_{KTE}$, consider instead the mixed state:
\[
\chi_{KTE} = \sum_K\frac{1}{d^n}\kb{K}\otimes\sum_{b\in\mathcal{J}(q_A,q_B:a)}|\beta_{b|a}|^2\kb{b}\otimes\kb{E_{a,b}}.
\]
Clearly $\Hmin(K|TE)_\chi = n\log_2 d$ since the $K$ and $TE$ registers are separable.  Let $\mathcal{J} = \mathcal{J}(q_A,q_B:a)$; we claim that $|\mathcal{J}|\chi \ge \sigma^{(a)}$ from which it holds that $\Hmin(K|TE)_{\sigma^{(a)}} \ge \Hmin(K|TE)_\chi - \log_2 |\mathcal{J}|$.  Since it is clear that $|\mathcal{J}| \le d^{nH_d(\hd(q_A,q_B)+\delta)}$ for any $a$ (this follows from the well known bound on the volume of a Hamming ball), this will imply our desired claim in Equation \ref{eq:entropy-ideal-state}.

Consider an arbitrary vector $\ket{\zeta} \in \mathcal{H}_K\otimes\mathcal{H}_{TE}$.  We may write $\ket{\zeta} = \sum_k\gamma_k\ket{k}\otimes\ket{\psi_k}_{TE}$.  Then:
\begin{align*}
  \braket{\zeta|\chi|\zeta} = \sum_k\frac{|\gamma_k|^2}{d^n}\sum_{b\in\mathcal{J}}|\beta_b|^2|\braket{\psi_k|b,E_{a,b}}|^2
\end{align*}
and:
\begin{align*}
  \braket{\zeta|\sigma^{(a)}|\zeta} = \sum_k\frac{|\gamma_k|^2}{d^n}\sum_{b,b'\in\mathcal{J}}\beta_b\beta_{b'}^*\braket{\psi_k|b,E_{a,b}}\braket{b',E_{a,b'}|\psi_k} = \sum_k\frac{|\gamma_k|^2}{d^n}\left|\sum_{b\in\mathcal{J}}\beta_b\braket{\psi_k|b,E_{a,b}}\right|^2
\end{align*}
From the Cauchy-Schwarz inequality, it holds that
\[
\braket{\zeta|\chi|\zeta} \ge \sum_k\frac{|\gamma_k|^2}{d^n}\frac{1}{|\mathcal{J}|}\left|\sum_{b\in\mathcal{J}}\beta_b\braket{\psi_k|b,E_{a,b}}\right| = \frac{1}{|\mathcal{J}|}\braket{\zeta|\sigma^{(a)}|\zeta}.
\]
Since $\ket{\zeta}$ was arbitrary, it follows that $|\mathcal{J}|\chi \ge \sigma^{(a)}$ as desired.  Thus, it holds that
\[
\Hmin(K|TE)_{\sigma'}\ge\min_a\Hmin(K|TE)_{\sigma^{(a)}} \ge n\log_2 d - \max_a\log_2|\mathcal{J}(q_A,q_B:a)| \ge n\left(\log_2 d - \frac{H_d(\hd(q_A,q_B)+\delta)}{\log_d 2}\right).
\]

Of course, this was only the entropy contained in the ideal state produced by Theorem \ref{thm:sample}.  However, Lemma \ref{lemma:entropy} (using the choice of $t$ and observation $q_A,q_B$ as the random variable $X$ in that theorem) completes the analysis.  In particular, except with probability at most $2\epsilon^{1/3}$, it holds that the conditional quantum min entropy in the real state, following the protocol's execution, is:
\begin{equation}
  \Hmin^{4\epsilon + 2\epsilon^{1/3}}(K|E)_{\psi(t,q_A,q_B)} \ge n\left(\log_2 d - \frac{H_d(\hd(q_A,q_B)+\delta)}{\log_d 2}\right).
\end{equation}
where $\ket{\psi(t,q_A,q_B)}_{ABE}$ is the actual quantum state produced by the adversary initially conditioned on a sample subset of $t$ and an observation $q_A$ and $q_B$.

Now that we have a bound on the quantum min entropy, we may compute the final expected secret key length.  Let $\epsilon_{PA} = 9\epsilon + 4\epsilon^{1/3}$ be the desired distance from the ideal secret key as in Equation \ref{eq:PA}. Then, except with a failure probability of $\epsilon_{fail} = 2\epsilon^{1/3}$, it holds that the protocol may distill a secret key of size:
\begin{equation}\label{eq:final-keyrate-ideal}
  \ell = n\left(\log_2 d - \frac{H_d(\hd(q_A,q_B)+\delta)}{\log_d 2}\right) - \leak - 2\log\frac{1}{\epsilon}
\end{equation}

Note that, to estimate $\leak$, parties would sample their raw key error rate.  This is purely a classical problem at that point and $\leak$ is a quantity that may be observed directly.  Later, when we evaluate, we will simulate a reasonable value for $\leak$ based on the channel noise observed in the forward and reverse channel. Though since that is a purely classical process we do not elaborate further on how this is done here.  Interestingly, later, when we consider imperfect devices, the noise in both channels (forward and reverse) will be needed to bound the quantum min entropy and so will warrant further discussion later.

\subsection{Evaluation}\label{section:eval1}

We now evaluate this protocol and compare also to high-dimensional BB84 \cite{bechmann2000quantum}.  To make a fair comparison, we will compare to two parallel copies of BB84 as was done in \cite{two-way-qkd-security} (though, there, only qubit systems were discussed); namely, we will assume that Alice and Bob are able to run two parallel ``sessions'' of BB84 using the two-way quantum channel thus effectively doubling the key-rate in that case.  To perform a numerical comparison, we will model both channels (the ``forward'' channel connecting Alice to Bob and the ``reverse'' channel connecting Bob to Alice) as depolarization channels.  Note that this assumption is not required for our security proof which works for any channel (the users must simply determine $\hd(q_A,q_B)$ and $\leak$ and then evaluate Equation \ref{eq:final-keyrate-ideal}).  Instead, we use this assumption only to evaluate and also to compare with prior work (which often assumes depolarization channels when evaluating key-rates).  A depolarization channel acting on a $d$-dimensional system $\rho$ acts as follows: $\mathcal{E}:\rho \mapsto (1-Q)\rho + Q/dI$, where $I$ is the $d$-dimensional identity operator.

We will actually consider two specific cases for the two-way quantum channel typically considered when evaluating two-way QKD protocols, namely dependent and independent channels.  In the independent case, each channel is modeled as two independent depolarization channels.  For the dependent case, the noise in the reverse channel may depend on the forward channel.  This may arise, for instance, in certain fiber implementations, where sending a photon back along the same channel it traveled originally will ``undue'' some of the noise picked up in the first channel \cite{lucamarini2014quantum,two-way-qkd-security}.

For the independent case, we may easily compute the expected error in the measurement string $q_A$ and $q_B$.  If the initial state prepared were $\ket{\psi_0}$ then, following the first depolarization channel, it will evolve to:
\[
\kb{\psi_0} \mapsto (1-Q)\kb{\psi_0} + \frac{Q}{d^2}I = \rho,
\]
from which we compute $\hd(q_A,q_B) = \frac{Q}{d^2}(d^2-d) = Q\left(1-\frac{1}{d}\right)$.  The remaining, unmeasured portions, have the key encoded into them and returned to Alice.  This return operation is modeled as a second depolarization channel.   If $\rho^k$ is the result of Bob encoding key-digit $k$ onto $\rho$, then the state after this second depolarization channel is easily found to be in the independent case:
\[
\rho^k_{ind} = (1-Q)^2\kb{\psi^k} + \frac{1-(1-Q)^2}{d^2}I
\]
where $\ket{\psi^k} = I_A\otimes U_k\ket{\psi_0}$.  In the \emph{dependent} case, we take the return state to be:
\[
\rho^k_{dep} = (1-Q)\kb{\psi^k} + \frac{Q}{d^2}I.
\]
Note that, in the independent case, the noise in the reverse channel is independent of the noise in the forward channel; whereas in the dependent case, this is not true.  Such scenarios can arise, as mentioned, in various fiber implementations.

Alice will then measure the two systems using POVM $\Lambda$ leading to her guess of Bob's encoding.  Any errors here will result in leakage due to error correction which we account for by setting $\leak = 1.2 H(A|B)$ and where $H(A|B)$ may be easily computed using the computed state $\rho^k$ above.  In particular, let $Pr(A=i | B=k)_{ind}$ be the conditional probability that, assuming Bob encoded key digit $k$, that Alice decodes a key digit of $i$) (in particular her $\Lambda_i$ element clicked) in the independent channel case.  From $\rho^k_{ind}$, this is easily found to be:
\begin{equation}
  Pr(A=i | B=k)_{ind} = \left\{\begin{array}{ll}
  (1-Q)^2 + \frac{1 - (1-Q)^2}{d} & \text{ if $k=i$}\\
  \frac{1-(1-Q)^2}{d} & \text{ if $k \ne i$}
  \end{array}\right.
\end{equation}
A similar expression may be found for the dependent case:
\begin{equation}
  Pr(A=i | B=k)_{dep} = \left\{\begin{array}{ll}
  (1-Q) + \frac{Q}{d} & \text{ if $k=i$}\\
  \frac{Q}{d} & \text{ if $k \ne i$}
  \end{array}\right.
\end{equation}

Of course $Pr(B = k) = 1/d$ (in both dependent and independent cases).  Consider, first, the independent case.  Let $x = (1-Q)^2 + \frac{1 - (1-Q)^2}{d}$ and $y = \frac{1-(1-Q)^2}{d}$.  Of course $x + (d-1)y = 1$. Simple algebra then shows that for the independent case we have:
\begin{align*}
  H(A|B) = H(AB) - H(B) &= -\frac{1}{d}\sum_k\left(x\log\frac{1}{d}x + \sum_{i\ne k} y\log\frac{1}{d}y\right) - \log d\\
  &=(x + (d-1)y)\log d - x\log x - (d-1)y\log y - \log d\\
  &= -x\log x - (d-1)y \log y.
\end{align*}
An identical expression, though changing $x = 1-Q + Q/d$ and $y = Q/d$ may be found for the dependent case.  This allows us to compute $\leak$ and thus evaluate $\ell$ in Equation \ref{eq:final-keyrate-ideal}.

The key-rate of the protocol, for dimension $d = 2, 4$, and $8$ is shown in Figure \ref{fig:keyrate1} for the dependent and independent cases.  We set $\epsilon = 10^{-36}$ thus giving both a failure probability, and an $\epsilon'$-secret key, on the order of $10^{-12}$. We note that, as the dimension increases, the key rate also increases.   Interestingly, this is not simply due to the fact that, as the dimension increases, the number of possible raw key digits from one signal state increases.  If this were the case, running multiple parallel copies of lower dimensions would yield the same result.  However, this is clearly not the case as shown in Figure \ref{fig:parallel-comp} where we compare two parallel executions of the protocol running with $d=2$ to a single execution of the protocol running with $d=4$.  We also note that, for higher dimensions, the number of signals needed to first achieve a positive key-rate decreases as the dimension increases.  This work thus shows even more advantages to high-dimensional quantum states that cannot be achieved by simple, naive, parallel executions of a lower-dimensional protocol.

In Figure \ref{fig:bb84-comp}, we compare the key-rate of this two-way protocol to running two independent copies of (high-dimensional) BB84.  For evaluating the high-dimensional BB84 key-rate we use results from \cite{HD-BB84-keyrate}.  Interestingly, the key-rate of BB84 in the qubit-case is exactly twice that of the two-way protocol (thus the advantage is only in the parallel executions of BB84 and not in the protocol itself in this ideal device scenario), yet for higher dimensions, the key-rate of BB84 is more than twice the rate of the two-way protocol analyzed here (implying BB84 as a protocol has an advantage over this two-way protocol in higher dimensions).  Finally, we evaluate the key-rate of the protocol as a function of the raw key error when the number of signals is large ($N = 10^{20}$) and compare to BB84 in Figure \ref{fig:keyrate-noise}.  Here, the raw-key error is defined to be the probability that Alice and Bob's key digits disagree.  We note that the noise tolerance when $d=2$ and dependent channels is identical to that of standard qubit BB84.  These similarities to BB84 were discovered also in \cite{two-way-qkd-security} for the qubit case but for a more complicated two-way protocol involving four encoding operations (at the qubit level).  Rather surprisingly, our work here shows that these additional encoding operations do not help the noise tolerance of this protocol in the qubit case.  However, also interestingly, this similarity between BB84 and the two-way protocol here no longer holds for $d > 2$.  In higher dimensions, BB84 always outperforms the two-way protocol.

\begin{figure}
    \centering
    \includegraphics[width=.5\textwidth]{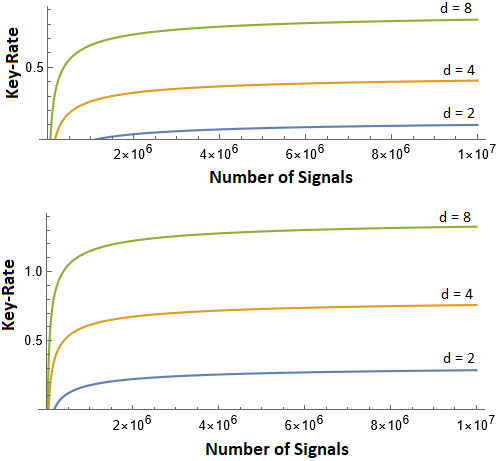}
    \caption{Evaluating the key-rate of our protocol under a depolarization channel.  Top: Assuming the forward and reverse channels are independent (thus ultimately creating additional noise in the reverse channel); Bottom: Assuming the forward and reverse channel noise is dependent.  We note that as the dimension increases, the key-rate also increases and the number of signals needed before a positive key rate can be attained decreases.  In both graphs, we take $Q = 10\%$.}
    \label{fig:keyrate1}
\end{figure}

\begin{figure}
    \centering
    \includegraphics[width=.5\textwidth]{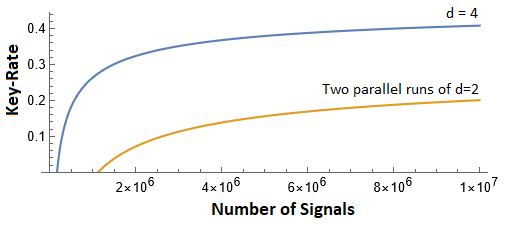}
    \caption{Comparing the key-rate of this protocol setting $d=4$ to two independent copies of running the protocol set at $d=2$.  Note that two copies of a two dimensional protocol cannot outperform the four dimensional version in both key-rate and also the point at which the key-rate becomes positive.  Here, we have $Q = 10\%$.}
    \label{fig:parallel-comp}
\end{figure}

\begin{figure}
    \centering
    \includegraphics[width=.5\textwidth]{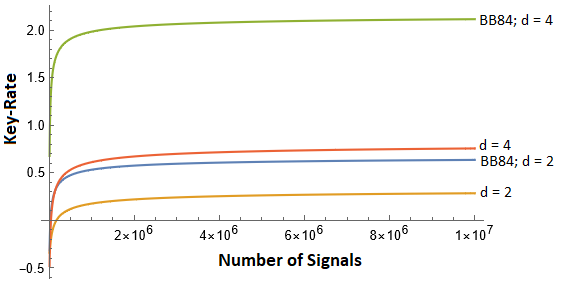}
    \caption{Comparing the key-rate of this protocol under a dependent channel assumption with two independent copies of BB84. Here, we have $d=4$ and $Q = 10\%$.  We note that, for the $d=2$ case, the key-rate of \emph{one copy} of BB84 converges to the key-rate of this two-way protocol in the dependent case.  For higher dimensions, however, BB84 outperforms the two-way protocol.  The graph of the independent channel case is similar, but with, as expected due to its increased noise levels, a more drastic difference between the two protocols.}
    \label{fig:bb84-comp}
\end{figure}

\begin{figure}
    \centering
    \includegraphics[width=.5\textwidth]{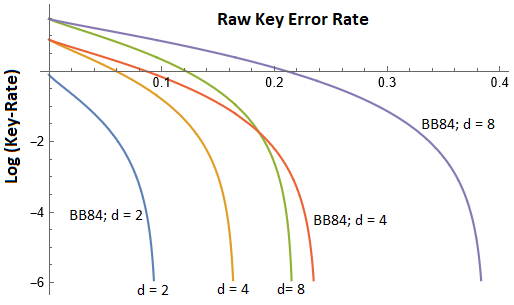}
    \caption{Evaluating the noise tolerance of the protocol in terms of raw key error rates (namely the probability that Alice and Bob's key digits disagree) in the dependent case.  Also comparing to HD-BB84.  We note that, when $d=2$ the noise tolerance of both protocols agree and tends towards $11\%$ (it is slightly lower here, due to our imperfect error correction factor of $1.2H(A|B)$; if the $1.2$ is replaced with $1$, both protocols achieve a noise tolerance of $11\%$).  For $d > 2$, BB84 outperforms.}
    \label{fig:keyrate-noise}
\end{figure}

\section{Imperfect Measurements}

We now consider the case where Alice and Bob's measurement devices are not ideal.  In particular, their measurements may have inefficiencies leading to missed detections.  Furthermore, we add the possibility of vacuum events (e.g., photon loss).  In this case, the protocol is identical as before, except that, now, if Alice detects a vacuum when attempting to decode Bob's key choice (either a real vacuum or a detector miss due to a low efficiency), she will signal to Bob to discard that particular round's key digit.  The detection of vacuums by Bob and Alice in the check stage will be recorded and used in estimating Eve's uncertainty.  Our contribution in this section is to derive a finite key expression for this two-way protocol under such conditions and to show how our proof method, applied in the previous section for the ideal device scenario, can be readily adapted to more complicated circumstances such as this.  Finally, our method does not require characterization of the device efficiency, thus leading to a (very) partial form of device independence on the measurement devices.  This is in contrast to our work in \cite{krawec-sample-qkd} for HD-BB84 where only vacuum pulses were analyzed but detectors were perfect and fully characterized; here we tackle the more difficult problem where devices are also inefficient.

Now, Bob's measurement may be described by the POVM $\mathcal{Z} = \{Z_0,\cdots, Z_{d-1}, Z_{vac}\}$ where:
\begin{align}
  &Z_j = \eta\kb{j}\\
  &Z_{vac} = I - \sum_j Z_j,\notag
\end{align}
where $\eta \in [0,1]$ represents the detector's efficiency (with $\eta=1$ meaning perfect devices).  We also increase the dimension of the underlying Hilbert space to $d+1$ and use $\ket{vac}$ to represent the vacuum state which satisfies $\braket{vac|j} = 0$ for all $j$.  Note that $Z_{vac}$ will click if Eve sends a vacuum pulse (or if there is loss), however it may also click if one of the detectors which ideally would have clicked (if $\eta = 1$) missed the signal.  Importantly, users cannot tell the difference between a vacuum state and a missed detection.  A similar POVM may be defined for Alice (who also needs one for the key decoding measurement).  Interestingly, our analysis does not require actual knowledge of $\eta$.

To analyze this scenario, we return to the ideal states $\{\ket{\phi^t}\}$ which now live in a larger Hilbert space to accommodate the additional vacuum state.  Let $\ket{\psi}_{ATE}$ be the state produced by the source (Eve in our case), where the Alice and Bob systems consist of $N$ tensor copies of a $d+1$ dimensional Hilbert space spanned by $\{\ket{0}, \cdots, \ket{d-1}, \ket{vac}\}$.  As we are using the same classical sampling strategy as before, by Theorem \ref{thm:sample}, there exist ideal states such that:
\begin{equation}\label{eq:idealstate-2}
  \ket{\phi^t} \in \text{span}\left(\ket{a,b} \st \hd(a_t,b_t) \dc \hd(a_{-t}, b_{-t})\right)\otimes\mathcal{H}_E.
\end{equation}
where the average over all subsets is $\epsilon$-close in trace distance to the real state (we use the same $\delta$ as in Equation \ref{eq:delta}).  Note that, above, the strings $a$ and $b$ are in a $d+1$ character alphabet in order to accommodate the additional vacuum state.

As before, we analyze the ideal state first, though our analysis is more involved in this case as we will need to consider the second attack operator this time.  Note that, previously, we could ignore the second attack operator to Eve's advantage (we computed Eve's uncertainty before this second probe and before sending additional qubits to Alice - her second probe and transmission would only \emph{increase} her uncertainty and so by ignoring this, we got a worst-case lower-bound on the key rate).  Now, however, the second probe can introduce loss and, furthermore, Alice will discard all rounds where she did not detect a signal.  Thus, the final raw key will be conditional, based on the observations of vacuums and, so, we can no longer ignore the second attack in this instance.  Eve's second attack can directly influence the final key beyond simply creating additional errors that are taken into account using the $\leak$ term.  See Figure \ref{fig:diagram}.

Consider a particular run of the protocol on the ideal state $\sigma = \sum_t\frac{1}{T}\kb{t}\otimes\kb{\phi^t}$.  Let $t$ be the chosen subset and $q_A, q_B \in \{0,1,\cdots,d-1,vac\}^m \cong \al_{d+1}^m$ be the observed outcome for Alice and Bob using POVM $\mathcal{Z}$.  Note that, unlike before, we are not making an ideal basis measurement (in the basis $\{\ket{0}, \cdots, \ket{d-1}, \ket{vac}\}$)  and, so, observing a particular $q_A$ and $q_B$ does not immediately translate to direct information on $a_t$ and $b_t$ in Equation \ref{eq:idealstate-2}.  That is, unlike our earlier analysis where parties knew that $a_t = q_A$ and $b_t = q_B$ due to them making an ideal basis measurement, the value of $a_t$ and $b_t$ cannot be directly observed; instead we must use our observation to obtain suitable constraints on what they might be.  Thus, our challenge now is to use this observation to determine a ``worst case'' condition on the structure of the collapsed ideal state.  In particular, whenever a ``vacuum'' is observed, Alice or Bob cannot be certain if it is due to an actual vacuum or a detector inefficiency.  Thus, they must assume the worst that it could have been any of the $d+1$ symbols.  Therefore, vacuum observations lead to a significant amount of uncertainty in the final collapsed state of the ideal superposition.

More formally, consider the post measured state after measuring $\ket{\phi^t}$ and observing this $q_A, q_B$.  Due to the structure of $\ket{\phi^t}$ and $\mathcal{Z}$, it is not difficult to see that the post measurement state of the unmeasured portion must be of the form:
\begin{equation}
  \sigma(t,q_A,q_B) = \sum_{\widetilde{q}_A, \widetilde{q}_B\in V(q_A,q_B)}p(\widetilde{q}_A,\widetilde{q}_B)P\left(\sum_{a,b \in J(\widetilde{q}_A,\widetilde{q}_B)}\hat{\alpha}_{a,b,\widetilde{q}_A,\widetilde{q}_B}\ket{a,b}\ket{\hat{E}_{a,b,\widetilde{q}_A,\widetilde{q}_B}}\right),
\end{equation}
where $\sum p(\widetilde{q}_A,\widetilde{q}_B) = 1$, $p(x,y) \ge 0$, and:
\begin{align}
  V(q_A,q_B) = \left\{(x,y) \in \al_{d+1}^n\times\al_{d+1}^n \st \begin{array}{c}x^\ell = q_A^\ell \text{ if } q_A^\ell \ne vac\\y^\ell = q_B^\ell \text{ if } q_B^\ell \ne vac\end{array}\right\}\label{eq:Vset}\\
  J(\widetilde{q}_A,\widetilde{q}_B) = \{(a,b)\in \al_{d+1}^n\times \al_{d+1}^n \st \hd(\widetilde{q}_A,\widetilde{q}_B) \dc \hd(a,b)\}.
\end{align}
Above, $x^\ell$ represents the $\ell$'th character of word $x$ (similar for the other character strings).  Here, $V(q_A, q_B)$ represents the uncertainty that Alice and Bob have on the post-measured ideal state due to device imperfections.  Namely, whenever the $\ell$'th signal results in a vacuum, as discussed, they cannot be certain whether it is really a vacuum or one of the other $\ket{x^\ell}$.  Whenever the $\ell$'th signal results in a non-vacuum, however, they can be certain the state collapsed to $\kb{x^\ell}$.  This set $V$ is the set of all strings that agree with the observed $q_A$ and $q_B$ given the definition of our measurement POVM - whenever the measurement produces a non-vacuum state, users are certain of the underlying symbol measured; otherwise there is uncertainty and the symbol might have been any of the $d+1$ characters.  From this uncertainty, then, the set $J$ denotes the potential basis states that are $\delta$-close to what might have been observed if devices were ideal.  Note that, if $\eta = 1$ and if there is no vacuum state, it holds that $V(q_A,q_B) = \{(q_A,q_B)\}$ and so the state above is equivalent to the ideal case in Equation \ref{eq:ideal-collapse} as expected.

At this point, Bob encodes his key choice, leading to the state:
\begin{align}
   &\frac{1}{d^n}\sum_{k\in\al_d^n}\kb{k}\otimes\left(\sum_{\widetilde{q}_A,\widetilde{q}_B}p(\widetilde{q}_A,\widetilde{q}_B)P\left[\sum_{a,b}\hat{\alpha}_{a,b,\widetilde{q}_A,\widetilde{q}_B}\exp(2\pi i (b\cdot k)/d)\ket{a,b}_{AT}\ket{\hat{E}_{a,b,\widetilde{q}_A,\widetilde{q}_B}}\right]\right)\notag\\
  =&\frac{1}{d^n}\sum_k\kb{k}\otimes\left(\sum_{\widetilde{q}_A,\widetilde{q}_B}p(\widetilde{q}_A,\widetilde{q}_B)P\left[\sum_{a\in\al_{d+1}^n}\alpha_{a,\widetilde{q}_A,\widetilde{q}_B}\ket{a}\sum_{b\in J(\widetilde{q}_A,\widetilde{q}_B:a)}\beta_{a,b,\widetilde{q}_A,\widetilde{q}_B}\exp(2\pi i (b \cdot k)/d)\ket{b}_T\ket{E_{a,b,\widetilde{q}_A,\widetilde{q}_B}}\right]\right).\label{eq:state-enc-imperfect}
\end{align}
where:
\begin{equation}
  J(\widetilde{q}_A,\widetilde{q}_B:a) = \{b \in \al_{d+1}^n \st \hd(a,b) \dc \hd(\widetilde{q}_A,\widetilde{q}_B)\}.
\end{equation}
Note that, for the key encoding, we assume that it leaves the vacuum state untouched.  In particular, for $x \in \{0, \cdots, d-1\}$ and $y \in \{0, \cdots, d-1\}$, we have:
\[
U_y\ket{x} = \exp(2\pi i xy/d)\ket{x}
\]
while:
\[
U_y\ket{vac} = \ket{vac}.
\]
Thus, in Equation \ref{eq:state-enc-imperfect}, by $b\cdot k$, for some $b \in \al_{d+1}^n$, we mean $\tilde{b}\cdot k$ where $\tilde{b}\in\al_d^n$ is the same as word $b$ but with every vacuum symbol replaced with a zero.

At this point, the $T$ system returns to Eve's control.  As mentioned earlier, however, it is not sufficient to compute Eve's uncertainty holding the transit register at this point.  This is due to the fact that Eve's reverse attack is allowed to create vacuum events which will later be discarded by Alice.  Thus, in a way, Eve has some control over which key bits in the $K$ register are to be used and this must be taken into account.  Before, in the ideal device case considered earlier, all key bits were used regardless of Alice's measurement - the only effect Eve had at that point would be to increase the error rate, thus causing additional information leakage due to error correction (which is taken into account in the $\leak$ term of our key-rate evaluation).  Here, $E$'s second attack can potentially increase her information by discarding rounds strategically.

We will model Eve's reverse attack as a unitary operation, $U_R$, acting on all $n$ qudits and her previous ancilla state simultaneously.  Without loss of generality, the action of this operation may be written:
\begin{equation}
U_R\ket{b, E_{a,b,\widetilde{q}_A,\widetilde{q}_B}} = \sum_{c\in\mathcal{A}_{d+1}^n}\ket{c}\ket{F_{a,b,c,\widetilde{q}_A,\widetilde{q}_B}}
\end{equation}
where we assume the $d+1$'th character in $\mathcal{A}_{d+1}$ represents the vacuum state.  After applying this attack operation, the state becomes:
\[
\tau = \frac{1}{K}\sum_k\kb{k}\otimes\sum_{\wideq}p(\wideq)P\left(\sum_{a,c\in\al_{d+1}^n}\alpha_{a,\wideq}\ket{a,c}_{AT}\sum_{b\in J(\wideq :a)}\beta_{a,b,\wideq}\exp(2\pi i (b\cdot k)/d) \ket{F_{a,b,c,\wideq}}\right).
\]

Now, the $T$ register is returned to Alice who performs a measurement of the joint $AT$ register using the POVM $\Lambda = \{\Lambda_0,\cdots, \Lambda_{d-1},\Lambda_{vac}\}$ where:
\[
\Lambda_k = \eta\sum_a P\left(\sum_j\exp(2\pi i k j /d)\ket{j,j+a}\right)
\]
and:
\[
\Lambda_{vac} = I - \sum_k\Lambda_k.
\]
Note that, for Alice, a vacuum outcome results from either a missed detection (based on $\eta$) or if either or both of the $A$ or $T$ registers contain a vacuum for that signal state.  After this measurement, Alice receives some outcome $\bar{a} \in \{0, 1, \cdots, d-1, vac\}$ and reports all indices where she received a vacuum state (as those will be discarded).  Let ${m}(\bar{a}) = m_1\cdots m_n$ where $m_i$ is 1 only if the $i$'th index of $\bar{a}$ is a vacuum ($m_i = 0$ otherwise).  That is ${m}(\bar{a})$ is the classical bit string message that Alice sends to Bob marking all rounds where she received a vacuum.  Since this is a message sent on the public channel, we record it in a separate register.  For any particular $m(\bar{a})$, Bob will discard the corresponding bits of his key register.  Let $v(\bar{a})$ be the number of vacuum events in $\bar{a}$.  Thus, the state now becomes:
\begin{equation}
\sum_{\bar{a}}p_A(\bar{a})\kb{\bar{a}}\otimes\tau^{t,q,\bar{a}}_{KE},
\end{equation}
where the state $\tau^{t,q,\bar{a}}_{KE}$ is found to be (after tracing out both the $A$ and $T$ registers, along with the discarded key digit systems):
\begin{align*}
  &\frac{1}{d^{n-v(\bar{a})}}\sum_{k_s\in\mathcal{A}_d^{n-v(\bar{a})}}\kb{k_s}\otimes\\
  &\sum_{k_r\in\mathcal{A}^{v(\bar{a})}}\frac{1}{d^{v(\bar{a})}}\sum_{a,c,\wideq}|\alpha_{a,\wideq}|^2P\left(\sum_{b\in J(\wideq : a)}\beta_{a,b,\wideq}\exp(2\pi ib\cdot f(k_s,k_r,m(\bar{a}))/d)\ket{F_{a,b,c\wideq}}\right).
\end{align*}
where, above, $f$ is a function that correctly reconstructs $k$ using $k_s$ and $k_r$ by placing the ``keep'' key digits ($k_s$) in the correct location relative to the ``reject'' digits ($k_r$); clearly this depends only on $m(\bar{a})$.

At this point, we bound the min entropy conditioning on a particular $t$, $\bar{a}$, $q_A$, and $q_B$ outcome.  We use a similar analysis method as in Section \ref{sec:security1} to bound the min entropy, namely by defining a similar mixed state, computing the entropy for that mixed state, and showing that the entropy of the superposition state is similar (the algebra from the ideal-device case analyzed earlier follows exactly in this non-ideal device case).  In particular, we find:
\begin{align*}
\Hmin(K|E) &\ge (n-v(\bar{a}))\log_2 d - \max_{a,c,k_r,\wideq}\log_2|J(\wideq : a)|\\
&\ge (n-v(\bar{a}))\log_2 d -  \max_{\wideq \in V(q_A,q_B)}\frac{n\Hextd_{d+1}(\hd(\wideq) + \delta)}{\log_{d+1} 2},
\end{align*}
where the last inequality can be shown using the same analysis method as earlier on bounding the size of the set $J(\wideq:a)$.

We therefore need only to find the maximum Hamming distance of all $\wideq \in V(q_A,q_B)$, given the observed $q_A$ and $q_B$.  However, it is not difficult to see by the definition of $V(q_A,q_B)$ (Equation \ref{eq:Vset}) that this is maximized whenever we set index $j$ of $\widetilde{q}_A$ and $\widetilde{q}_B$ to be different symbols every time either $q_A$ or $q_B$ is a vacuum in index $j$.  Said differently, whenever a symbol in ${q}_A$ is a non-vacuum, the corresponding symbol in $A$'s element of an entry in $V(q_A, q_B)$ must be that symbol; otherwise if it is a vacuum, the symbol can be anything.  Similarly for Bob.  This gives an upper-bound on the size of the superposition set based only on observed parameters, thus giving a worst-case bound on the min entropy.  The min-entropy can only be higher in reality (thus, the key-rate can only be better than what we derive here).  Let $x,y\in \{0,1,\cdots, d-1, vac\}^n$ and define:
\begin{equation}
\widetilde{\Delta}_H(x,y) = \frac{1}{n}\sum_j g(x_j, y_j),
\end{equation}
where:
\begin{equation}
  g(x_j, y_j) = \left\{\begin{array}{rl}
  1 & \text{ if } x \ne y \text{, or } x = vac \text{, or } y = vac\\
  0 & \text{ otherwise}\end{array}\right.
\end{equation}
Clearly this function $\widetilde{\Delta}_H(x,y)$ is counting the number of errors in the $x$ and $y$ strings while also treating any vacuum symbol in either $x$ or $y$ as an error.  We can then conclude that:
\begin{equation}
\Hmin(K|E) \ge (n-v(\bar{a}))\log_2 d -  \frac{n\Hextd_{d+1}(\widetilde{\Delta}_H(q_A,q_B) + \delta)}{\log_{d+1} 2}
\end{equation}

Of course, this is only the ideal state analysis.  However, by using Lemma \ref{lemma:entropy}, we may promote this to the real-state analysis.  Let $\epsilon_{PA} = 9\epsilon + 4\epsilon^{1/3}$, then, except with a failure probability of $\epsilon_{fail} = 2\epsilon^{1/3}$, the final secret key size is:
\begin{equation}
  \ell = (n-v(\bar{a}))\log_2 d -  \frac{n\Hextd_{d+1}(\widetilde{\Delta}_H(q_A,q_B) + \delta)}{\log_{d+1} 2} - \leak - 2\log\frac{1}{\epsilon}.
\end{equation}

Note that the above analysis did not take into account the actual value of $\eta$; instead any vacuum event was analyzed in a way that gave full advantage to the adversary.  If $\eta$ is actually known, tighter bounds may be derived potentially.  One may use a more complex sampling strategy which takes into account the number of vacuum counts, instead of treating all vacuum as errors.  However, our analysis above does not require Alice and Bob to know the actual value of $\eta$ thus giving a (very partial) form of device independence in the measurement devices, though at the potential cost of decreased key-rate as parties must assume the worst always.

\subsection{Evaluation}

We evaluate as in the ideal case considered earlier, assuming a depolarization channel (both dependent and independent).  To compute the expected observation $\widetilde{\Delta}_H(q_A,q_B)$, we have:
\[
\widetilde{\Delta}(q_A, q_B) = \frac{1}{m} \left(| \{ \text{vacuum symbols in $q_A$ or $q_B$}\}| + |\{\text{number of errors in remaining symbols}\}|\right).
\]
We assume for evaluation purposes the probability of a vacuum event is $\mu$ (this can be due to loss, detector inefficiencies, or both).  Then, assuming a depolarization channel with parameter $Q$ (see Section \ref{section:eval1}), we have that:
\begin{equation}
  \widetilde{\Delta}_H(q_A, q_B) = \mu + (1-\mu)Q\left(1 - \frac{1}{d}\right).
\end{equation}

We also require an expected value for $\nu(\bar{a})$, the number of vacuum events in Alice's second measurement.  For this, we evaluate the simple case where a loss can occur either in the first channel, or, if not in the first channel, then in the second.  Thus:
\[
\frac{1}{n}\nu(\bar{a}) = \mu + (1-\mu)\mu.
\]
Of course, this is only for our evaluation purposes; our security proof works for any observed value of $\nu(\bar{a})$.
Finally, for $\leak$, we use the same analysis as in Section \ref{section:eval1} to compute the error correction leakage.  This gives us everything needed to evaluate our bound.

We evaluate the key-rate of the protocol for different values of $\mu$ (the probability of a vacuum click) in Figure \ref{fig:keyrate-loss}.  We observe that this protocol is highly sensitive to vacuum events, though as the dimension increases, the protocol can withstand this better.  Whether this is a consequence of the protocol, or our security proof in this section not being tight, is an open question.   Though we do suspect it is more a consequence of the protocol being highly sensitive to noise.  Indeed, it has already been discovered that the qubit Ping Pong protocol is highly sensitive to loss by devising specific attacks \cite{wojcik2003eavesdropping,zhang2004improving,bostrom2008security} (with the attack in \cite{zhang2004improving} breaking security with only $25\%$ loss).  Our bound shows a loss tolerance much lower than this $25\%$, however we are also analyzing general attacks - our security analysis covers prior attacks but also new, undiscovered ones.  Thus, it may in fact be the case that this protocol is this highly sensitive to noise in the finite key setting and other countermeasures may be necessary.  None the less, we still have demonstrated a positive key rate is possible under general attacks with loss and that the dimension of the underlying signal can benefit key-rates under lossy channels.  

\begin{figure}
    \centering
    \includegraphics[width=.45\textwidth]{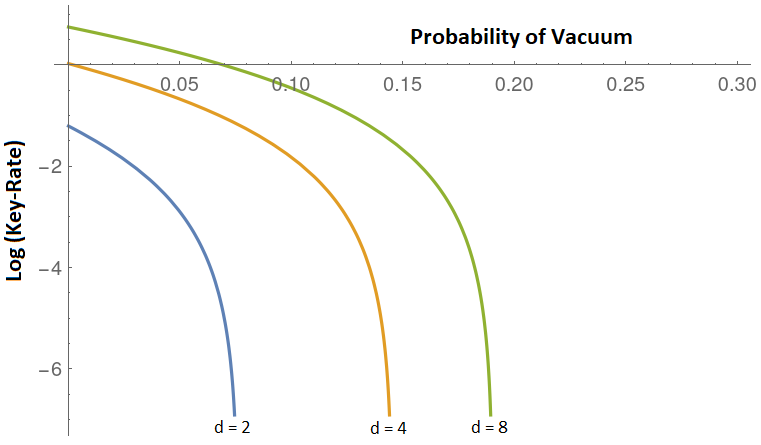}
    \includegraphics[width=.45\textwidth]{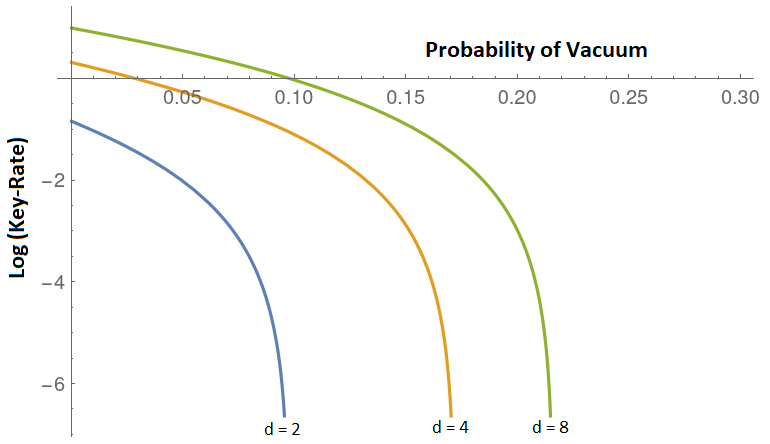}
    \caption{Evaluating the key-rate of the two-way protocol assuming loss and imperfect detectors. Here $Q = 5\%$ and the number of signals is set to $10^{20}$.  Left: Assuming independent depolarization channels; Right: Dependent channels.  See text for discussion.}
    \label{fig:keyrate-loss}
\end{figure}

\section{Closing Remarks}


In this paper, we considered the high-dimensional Ping Pong protocol introduced in \cite{QKD-HD-PP}.  In particular, we considered a variant of the protocol placing fewer device requirements on users (thus making it potentially easier to implement in practice while also creating some interesting theoretical problems).  Importantly, we derived an information theoretic security proof in the finite key setting.  Our methods may be broadly applicable to other QKD protocols which are not immediately reducible to full-entanglement based versions (making standard tools difficult, or impossible, to use).  For example, our proof approach may be applied to two-way protocols that do not exhibit the necessary symmetry requirements needed to use results in \cite{two-way-qkd-security} for reducing to entanglement-based versions, thus creating opportunities to analyze protocols where standard tools can't be immediately applied.  We also showed how our methods can be used to analyze imperfect detectors and lossy channels.  Finally, we performed a rigorous evaluation of the protocol's performance in a variety of dimensions and scenarios.  Our work provides additional evidence, beyond that already known as mentioned in the introduction, that high-dimensional quantum states may benefit, at least in theory, quantum communication, including in the two-way case.

Several interesting problems remain open.  Our proof does not require an exact characterization of the efficiency of the detectors, leading to a pessimistic key-rate bound under loss.  Whether this is a consequence of our proof method in this scenario, or that the protocol itself has low loss tolerance (due in part to its two-way channel), is an open question.  Also, adapting our proof to work with stronger forms of device independence could be interesting.  One potential way forward there may be to consider measurement devices that are not characterized, except for their overlap (as was done in \cite{two-way-qkd-security} for two-way qubit based protocols obeying certain symmetry properties).

\section*{Acknowledgments}
The author would like to acknowledge support from NSF grant number 2006126.


\end{document}